\pgfplotsset{compat=newest}
\newtheorem{assumption}{Assumption}
\newtheorem{remark}{Remark}
\newtheorem{theorem}{Theorem}
\definecolor{color1}{RGB}{0,128,0}    
\definecolor{color2}{RGB}{255,0,0}    
\definecolor{color3}{RGB}{139,69,19}  
\definecolor{color4}{RGB}{0,0,255}    
\definecolor{color5}{RGB}{152,78,163} 
\DeclareMathOperator\eye{I}
\newcommand\scalemath[2]{\scalebox{#1}{\mbox{\ensuremath{\displaystyle #2}}}}
\title{\LARGE \bf An adaptive optimal control approach to monocular depth observability maximization}
\author{Tochukwu Elijah Ogri$^{1}$ \and Muzaffar Qureshi$^{1}$  \and Zachary I. Bell$^{2}$  \and Kristy Waters$^{3}$ \and Rushikesh Kamalapurkar$^{1}$
\thanks{This research was supported by the Air Force Research Laboratories under contract number AFRL-FA8651-23-1-0006. Any opinions, findings, or recommendations in this article are those of the author(s), and do not necessarily reflect the views of the sponsoring agencies.}%
\thanks{$^{1}$ School of Mechanical and Aerospace Engineering, Oklahoma State University, email: {\tt\footnotesize \{tochukwu.ogri, muzaffar.qureshi, rushikesh.kamalapurkar\} @okstate.edu}.}%
\thanks{$^{2}$ Air Force Research Laboratories, Florida, USA, email: {
\tt \footnotesize zachary.bell.10@us.af.mil.}}
\thanks{$^{2}$ Autonomous Vehicles Laboratory, University of Florida, Gainesville, Florida, USA, email: {
\tt \footnotesize watersk@ufl.edu.}}
}
\begin{document}
\maketitle
\thispagestyle{empty}
\pagestyle{empty}

\begin{abstract} 
This paper presents an integral concurrent learning (ICL)-based observer for a monocular
camera to accurately estimate the Euclidean distance to features on a
stationary object, under the restriction that state information
 is unavailable. Using distance estimates, an infinite horizon optimal regulation problem is
solved, which aims to regulate the camera to a goal location while maximizing feature observability. Lyapunov-based stability analysis is used to guarantee exponential convergence of depth estimates and input-to-state stability of the goal location relative to the camera. The effectiveness of the proposed approach is verified in simulation, and a table illustrating improved observability through better conditioning of the regressor is provided.
\end{abstract}

\section{Introduction}\label{section:introduction}

The use of drones and other micro air vehicle systems has seen rapid growth in recent years due to their ability to perform dangerous or complex tasks such as surveillance, search and rescue, and weather monitoring, that are challenging or even impossible for human pilots \cite{SCC.Yu.Sharma.ea2013, SCC.Delaune.Bayard.ea2021}.  
In the absence of state-feedback information
from a positioning system, these robotic systems are forced to navigate, relying solely on local sensing data (e.g., camera images, inertial measurement units, and wheel encoders). The poses of objects in the surrounding environment relative to a robot must thus be determined from sensor data to inform estimation of its pose; otherwise, the performance of the controller may be affected, and the robotic system may fail to achieve its objective.

Accurately estimating the pose of a robot using cameras to reconstruct the environment using scaled Euclidean coordinates of an object is a key challenge, commonly referred to as simultaneous localization and mapping (SLAM)\cite{SCC.Dubbelman.Browning.2015, SCC.Mur.Montiel.ea2015, SCC.Taketomi.Uchiyama.ea2017,SCC.Karrer.Shmuck.ea2018}. A significant challenge in SLAM is determining the scale of
objects in a 2D image, given the loss of depth information. Several image-based methods estimate depth by reconstructing the structure of an object by using multiple images and scale information \cite{SCC.Hartley.Zisserman2003,SCC.Ma.Soatto.ea2004}, or by estimating motion using the camera's linear or angular velocities \cite{SCC.Matthies.Kanade.ea1989,SCC.Jankovic.Ghosh1995,SCC.Soatto.Frezza.ea1996, SCC.Kano.Ghosh.ea2001,SCC.Chiuso.Favaro.ea2002,SCC.Dixon.Fang.ea2003,SCC.Chen.Kano2004, SCC.Karagiannis.Astolfi2005,SCC.Braganza.Dawson.ea2007,SCC.DeLuca.Oriolo.ea2008, SCC.Hu.Aiken.ea2008, SCC.Morbidi.Prattichizzo.ea2009, SCC.Zarrouati.Aldea.ea2012, SCC.Dani.Fischer.ea2012,SCC.Dani.Fischer.ea2011, SCC.Bell.Chen.ea2017} where scales can be recovered using multiple calibrated cameras \cite{SCC.Hartley.Zisserman2003,SCC.Ma.Soatto.ea2004}. However, the performance of motion-based methods is limited when the objects lack parallax between successive camera images. Alternative approaches include the use of extended Kalman filters (EKFs)  \cite{SCC.Matthies.Kanade.ea1989,SCC.Soatto.Frezza.ea1996 , SCC.Kano.Ghosh.ea2001,SCC.Chiuso.Favaro.ea2002,SCC.Reif.Unbehauen1999} for depth estimation. For discrete time systems \cite{SCC.Boutayeb.Rafaralahy.ea1997} developed EKFs with convergence guarantees. However, due to the nature of EKFs their convergence result is local, and the corresponding propagation equations are only valid if the estimates are within a small neighborhood of the actual state. With techniques, such as those proposed in \cite{SCC.Dixon.Fang.ea2003,SCC.Karagiannis.Astolfi2005,SCC.Braganza.Dawson.ea2007,SCC.Hu.Aiken.ea2008, SCC.Zarrouati.Aldea.ea2012, SCC.Dani.Fischer.ea2011}, asymptotic convergence of structure estimation errors is guaranteed, and some of them guarantee exponential convergence of scale estimates \cite{SCC.Jankovic.Ghosh1995,SCC.Chen.Kano2004,SCC.DeLuca.Oriolo.ea2008, SCC.Morbidi.Prattichizzo.ea2009, SCC.Dani.Fischer.ea2012}. However, these methods rely on stringent conditions such as persistence of excitation (PE) or extended output Jacobian (EOJ), which may be difficult to satisfy in practice.

This paper extends the results of \cite{SCC.Bell.Chen.ea2017,SCC.Bell.Deptula.ea2020}, which developed exponentially converging observers using concurrent learning (CL) and integral concurrent learning (ICL) \cite{SCC.Bell.Chen.ea2017,SCC.Chowdhary.Johnson2011,SCC.Chowdhary.Muhlegg2013,SCC.Chowdhary.Yucelen.ea2013} to estimate the Euclidean distance to features on a stationary object in the camera's field of view (FOV) under the assumption that the velocities of the camera are known. The CL-based techniques in \cite{SCC.Bell.Chen.ea2017,SCC.Bell.Deptula.ea2020} guarantee exponential convergence of depth estimates while relaxing the PE assumption in favor of a finite excitation (FE) condition, which can be monitored and verified online. Without sufficient excitation, depth estimation is affected as monocular cameras cannot observe the scale of
objects. 

Excitation conditions require the motion of the camera to be non-parallel to the line joining the camera
and the object \cite{SCC.Bell.Harris.ea2019}.  To achieve such motion, this paper develops an adaptive optimal control scheme that plans velocities for depth observability maximization by penalizing non-orthogonal motion of a monocular camera as it attempts to reach a goal location. The paper demonstrates that feature observability can be improved through velocity planning, without the need for added excitation, by introducing a novel cost function that yields controllers with theoretical stability guarantees. To the best of our knowledge, this is the first study in the current path planning literature where adaptive optimal control
is employed to plan velocities for maximizing depth observability.


\section{Camera Motion Model}\label{section:problemFormulation}	
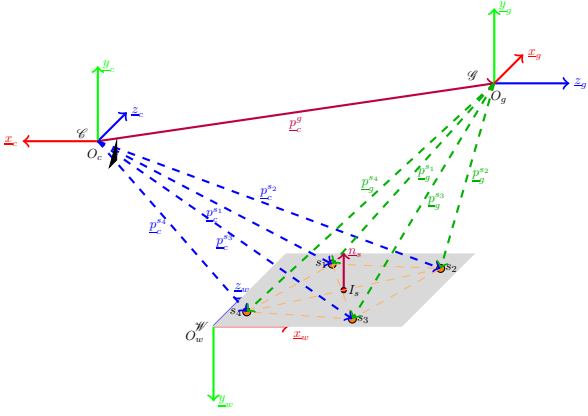
\begin{figure}
    \centering
     \begin{tikzpicture}[scale=0.5, transform shape]

    \coordinate (world) at (0,0,5);
    \node[left] at (world) {$\mathscr{W}$};
    \node[below, text=black] at ($(world)+(-0.5,0,0)$){$O_{w}$}; 
    \draw[->, thick, red] (world) -- ++(2,0,0) node[below right] {$\underline{x}_{w}$};
    \draw[->, thick, blue] (world) -- ++(0,0,-2) node[above] {$\underline{z}_{w}$};
    \draw[->, thick, green] (world) -- ++(0,-2,0) node[right] {$\underline{y}_{w}$};
    
    \coordinate (camera) at (-5,3,0); 
    \node[right] at ($(camera)+(-0.3,0.6,1)$){$\mathscr{C}$};
    \node[below, text=black] at ($(camera)+(0,0,0.2)$){$O_{c}$}; 
    \draw[->, thick, blue] (camera) -- ++(0,0,-2) node[right]{$\underline{z}_{c}$}; 
    \draw[->, thick, green] (camera) -- ++(0,2,0) node[right]{$\underline{y}_{c}$};
    \draw[->, thick, red, rotate around x=90, rotate around y=-90] (camera) -- ++(0,0,2) node[left]{$\underline{x}_{c}$}; 
    
    \filldraw[gray!30] (0,0,0) -- (5,0,0) -- (5,0,5) -- (0,0,5) -- cycle;
    
    \coordinate (goal) at (4,3,-4);
    \node[left] at ($(goal)+(-0.3,0.2,0)$){$\mathscr{G}$};
    \node[below, text=black] at ($(goal)+(0.2,0,0.2)$){$O_{g}$}; 
    \draw[->, thick, green, rotate around x=180, rotate around y=90] (goal) -- ++(0,-2,0) node[right]{$\underline{y}_{g}$}; 
    \draw[->, thick, blue, rotate around x=180, rotate around y=-90] (goal) -- ++(0,0,-2) node[right]{$\underline{z}_{g}$}; 
    \draw[->, thick, red, rotate around x=180, rotate around y=-90] (goal) -- ++(2,0, 0) node[right]{$\underline{x}_{g}$}; 
    
    \coordinate (normal) at (2.5,0,2.5);
    \draw[->,thick, purple] (normal) -- ++(0,1,0) node[right]{$\underline{n}_{s}$};
    
    \coordinate (s1) at (1.5,0,0.7);
    \coordinate (s2) at (4.5,0,1);
    \coordinate (s3) at (3.5,0,4.5);
    \coordinate (s4) at (0.5,0,4);
    
    \coordinate (Is) at (2.5,0,2.5);

    \draw[fill=orange] (s1) circle (3pt) node[left]{$s_{1}$};
    \draw[fill=orange] (s2) circle (3pt) node[right]{$s_{2}$};
    \draw[fill=orange] (s3) circle (3pt) node[right]{$s_{3}$};
    \draw[fill=orange] (s4) circle (3pt) node[left]{$s_{4}$};
    
    \draw[fill=red] (Is) circle (2pt) node[right]{$I_{s}$};
    
    Connect Features to form a rhombus
    \draw[thin, dashed, orange!60] (s1) -- (s2) -- (s3) -- (s4) -- cycle;
    
    \draw[thin, dashed, orange!60] (s1) -- (s3);
    \draw[thin, dashed, orange!60] (s2) -- (s4);

    \draw[->, thick, purple] (camera) -- (goal) node[midway, below]{$\underline{p}_{c}^{g}$};

    \draw[dashed, ->, thick, blue] (camera) -- (s1) node[midway, below]{$\underline{p}_{c}^{s_{1}}$};
    \draw[dashed, ->, thick, blue] (camera) -- (s2) node[midway, above]{$\underline{p}_{c}^{s_{2}}$};
    \draw[dashed, ->, thick, blue] (camera) -- (s3) node[midway, below]{$\underline{p}_{c}^{s_{3}}$};
    \draw[dashed, ->, thick, blue] (camera) -- (s4) node[midway, left]{$\underline{p}_{c}^{s_{4}}$};
    
    \draw[dashed, ->, thick, green!70!black] (goal) -- (s1) node[midway, right]{$\underline{p}_{g}^{s_{1}}$};
    \draw[dashed, ->, thick, green!70!black] (goal) -- (s2) node[midway, right]{$\underline{p}_{g}^{s_{2}}$};
    \draw[dashed, ->, thick, green!70!black] (goal) -- (s3) node[midway, right]{$\underline{p}_{g}^{s_{3}}$};
    \draw[dashed, ->, thick, green!70!black] (goal) -- (s4) node[midway, above]{$\underline{p}_{g}^{s_{4}}$};
    
    \draw[fill=black] (camera) ++(0.5,0,0) -- ++(0,-0.5,0) -- ++(0,0,0.5) -- cycle;
\end{tikzpicture}
    \caption{Camera tracking four planar features on an object while moving from $\mathscr{C}$ to $\mathscr{G}$.}
    \label{fig:cameraFeatureTracking}
\end{figure}
Consider a monocular camera that tracks features on a stationary object while the features are within its FOV using techniques similar to those in \cite{SCC.Bouguet.ea2001, SCC.Lucas.Kanade1981}. Leveraging these tracked features, the camera estimates the relative distances between the features and itself, subsequently utilizing these estimates to reach a user-specified goal location. Let the world frame be a fixed inertial reference frame, denoted by $\mathscr{W} \coloneqq \{\vec{x}_{w}, \vec{y}_{w}, \vec{z}_{w}\}$, with its origin located at $O_{w}$. Let the camera frame, denoted by  $\mathscr{C} \coloneqq \{\vec{x}_{c}, \vec{y}_{c}, \vec{z}_{c}\}$ be fixed to the camera, with its origin $O_{c}$ located at the principal point of the camera. Let the goal frame $\mathscr{G} \coloneqq \{\vec{x}_{g}, \vec{y}_{g}, \vec{z}_{g}\}$, be fixed to the goal, with its origin located at $O_{g}$. The three frames are illustrated in Figure~\ref{fig:cameraFeatureTracking}. 

To facilitate the development of the camera model, the following assumptions are necessary. \begin{assumption}\label{ass:trackableFeatures}
    The stationary object has features that can be detected and tracked, provided it is within the camera's FOV. Specifically, $\forall t \in \mathbb{R}_{\geq 0}$, a set of at least four trackable planar features are in the camera's FOV\cite{SCC.Parikh.Kamalapurkar.ea2015}.
\end{assumption}
\begin{assumption}\label{ass:goalLocation}
While the goal location may lie outside the camera's FOV, the position of the $i$th feature on the object relative to the goal position, denoted by $\underline{p}_{g}^{s_{i}} 
\in \mathbb{R}^{3}$, is known.
\end{assumption}
\begin{assumption}\label{ass:intrinsicMatrix}
    The camera's intrinsic matrix $A \in \mathbb{R}^{3 \times 3}$ is known and invertible \cite{SCC.Ma.Soatto.ea2004}.
\end{assumption}

Given a stationary object $s$ with its $i$th feature denoted by $s_i$ for all $i = 1, \dots, n$, Assumptions~\ref{ass:trackableFeatures} and \ref{ass:goalLocation}, the position of the $i$th feature on the object relative to the goal,  $\underline{p}_{g}^{s_{i}} \in \mathbb{R}^{3}$, is known. Consequently, the position of the goal relative to the camera can be determined as 
\begin{equation}\label{eq:goalPosition}
    \underline{p}_{c}^{g}(t) = \underline{p}_{c}^{s_{i}}(t) - R_{c}^{g}(t)\underline{p}_{g}^{s_{i}},
\end{equation}
where $\underline{p}_{c}^{g}(t) \in \mathbb{R}^{3}$ denotes the unknown position of the goal relative to the camera, $\underline{p}_{c}^{s_{i}}(t) \in \mathbb{R}^{3}$ denotes the unknown position of the $i$th feature on the object, with respect to $\mathscr{C}$, and $R_{c}^{g}(t)  \in \mathbb{R}^{3 \times 3}$ is the rotation matrix describing the orientation of $\mathscr{G}$ with respect to $\mathscr{C}$. The kinematics of the moving monocular camera relative to the goal location are given by
\begin{equation}\label{eq:cameraDynamics}
     \dot{\underline{p}}_{c}^{g}(t) = \underline{v}_{c}^{g}(t) \text{ and } 
    \dot{q}_{c}^{g}(t) = \frac{1}{2} B(q_{c}^{g}(t))\underline{\omega}_{c}^{g}(t),
\end{equation}
where $\underline{v}_{c}^{g}(t) \in \mathbb{R}^{3}$ and $\underline{\omega}_{c}^{g}(t) \in \mathbb{R}^{3}$ represent the linear (unknown) and angular (known) velocities of $\mathscr{G}$ with respect to $\mathscr{C}$, respectively, $B(q(t)) \coloneqq \begin{bmatrix}
    -{q_{v}}^{\top} \\ {q_{0}}\eye_{3 \times 3} + {q_{v}^{\times}}
\end{bmatrix} \in \mathbb{R}^{4 \times 3}$ is an orthogonal matrix which has the pseudoinverse property $B^{\top}(q(t))B(q(t)) = \eye_{3 \times 3}$, where $\eye_{3 \times 3}$ is a $3$ by $3$ identity matrix, and $q_{c}^{g}(t) \in \mathbb{R}^{4}$ represents the quaternion parametization of the rotational matrix $R_{c}^{g}(t)$, describing the orientation of $\mathscr{G}$ with respect to $\mathscr{C}$, with $q \coloneqq \begin{bmatrix}
    q_{0} & q_{v}^{\top}
\end{bmatrix}^{\top} \in \mathcal{S}^{4}$ which has the
standard basis $\{1,i,j,k\}$, where $\mathcal{S}
^{4} \coloneqq \{x \in \mathbb{R}^{4}\vert x^{\top}x = 1\}$, and $q_{0} \in \mathbb{R}$ and $q_{v} \in \mathbb{R}^{3}$ represent the scalar and vector
components of $q$, respectively. The angular velocity of $\mathscr{G}$ with respect to $\mathscr{C}$ given as $\underline{\omega}_{c}^{g}(t)$ is assumed to be known for the rest of the development of the paper.

Equation \eqref{eq:goalPosition} can be equivalently expressed in the form
$\label{eq:unitVector}
\begin{bmatrix}
    \underline{u}_{c}^{s_{i}}(t) & -\underline{u}_{c}^{g}(t)
\end{bmatrix}\begin{bmatrix}
    d_{c}^{s_{i}}(t) \\ d_{c}^{g}(t) 
\end{bmatrix} =  R_{c}^{g}(t)\underline{u}_{g}^{s_{i}}d_{g}^{s_{i}}$, by rearranging the terms, where $d_{c}^{s_{i}}(t) \in \mathbb{R}_{> 0}$ and $\underline{u}_{c}^{s_{i}}(t) \in \mathbb{R}^{3}$ are the magnitude and direction of the position vector $\underline{p}_{c}^{s_{i}}(t)$ of feature $s_{i}$ expressed in $\mathscr{C}$, respectively; $d_{c}^{g}(t) \in \mathbb{R}_{> 0}$ and $\underline{u}_{c}^{g}(t) \in \mathbb{R}^{3}$ are the magnitude and direction of the position vector $\underline{p}_{c}^{g}(t)$ of the goal $\mathscr{G}$ expressed in $\mathscr{C}$, respectively; and $d_{g}^{s_{i}} \in \mathbb{R}_{> 0}$ and $\underline{u}_{g}^{s_{i}} \in \mathbb{R}^{3}$ are the magnitude and direction of the position vector $\underline{p}_{g}^{s_{i}}$ of feature $s_{i}$ expressed in $\mathscr{C}$, respectively. Under Assumptions~\ref{ass:trackableFeatures} - \ref{ass:intrinsicMatrix}, the rotation matrix $R_{c}^{g}(t)$ and unit vector $\underline{u}_{c}^{g}(t)$ can be determined from a general set of features on the object using techniques such as planar homography decomposition or essential decomposition. In addition,  the unit vectors $\underline{u}_{g}^{s_{i}}$ and $\underline{u}_{c}^{s_{i}}(t)$ can be obtained from $\underline{u}_{g}^{s_{i}} \coloneqq  \frac{P_{g}^{s_{i}}}{\|P_{g}^{s_{i}}\|}$ and $\underline{u}_{c}^{s_{i}}(t) \coloneqq  \frac{A^{-1}P_{c}^{s_{i}}(t)}{\|A^{-1}\underline{p}_{c}^{s_{i}}(t)\|}$ where $P_{g}^{s_{i}}$, $P_{c}^{s_{i}}(t)$ are the homogeneous coordinates of feature $s_{i}$ in $\mathscr{G}$ and $\mathscr{C}$, respectively. The only remaining unknowns are the distances $d_{c}^{s_{i}}(t)$, $d_{c}^{g}(t)$ and $d_{g}^{s_{i}}$. These unknowns are estimated in the following using an ICL-based observer.
To simplify the notation, let $H_{s_{i}} (t) \coloneqq \begin{bmatrix}
\underline{u}_{c}^{s_{i}}(t) & -\underline{u}_{c}^{g}(t)
\end{bmatrix} \in \mathbb{R}^{3 \times 2}$. While $d_{c}^{g}(t) > 0$, the term $H_{s_{i}}^{\top}(t)H_{s_{i}}(t) $ is invertible such that (cf. \cite{SCC.Bell.Deptula.ea2020})
\begin{equation}\label{eq:distanceRegressor}
    \begin{bmatrix}
        d_{c}^{s_{i}}(t) \\ d_{c}^{g}(t)
    \end{bmatrix} = Y_{s_{i}}(t)d_{g}^{s_{i}},
\end{equation}
where $ Y_{s_{i}}(t) \coloneqq \left(H_{s_{i}}^{\top}(t)H_{s_{i}}(t)\right)^{-1}H_{s_{i}}^{\top}(t)R_{c}^{g}(t)\underline{u}_{g}^{s_{i}}$ is invertible and measurable under Assumptions~\ref{ass:trackableFeatures}-\ref{ass:intrinsicMatrix}. Furthermore, since the goal and the stationary object are stationary, the time derivatives of the unknown distances are known and given by 
\begin{equation}\label{eq:d1Deriv}
\dot{d}_{c}^{s_{i}}(t) 
 = -{\underline{u}_{c}^{s_{i}}}^{\top}(t)
       \underline{v}_{c}(t),
\end{equation}
\begin{equation}\label{eq:d2Deriv}
    \dot{d}_{c}^{g}(t) 
     = -{\underline{u}_{c}^{g}}^{\top}(t)
        \underline{v}_{c}(t), \text{ and}
\end{equation}
\begin{equation}\label{eq:d3Deriv}
    \dot{d}_{g}^{s_{i}} = 0,
\end{equation}
where $\underline{v}_{c}(t)\in \mathbb{R}^{3}$ represents the velocity of the camera, expressed in $\mathscr{C}$.
Since the goal location is fixed in $\mathscr{W}$, the relationship between $\underline{v}_{c}(t)$ and $\underline{v}_{c}^{g}(t)$ is given by $\underline{v}_{c}(t) = -\underline{v}_{c}^{g}(t)$.



The control objective is to design the camera velocity $\underline{v}_{c}(t)$ to improve feature observability by maximizing orthogonal motion of the camera with respect to the plane containing the features. The objective is achieved by using the ICL-based observer to generate estimates of the distances denoted by $\hat{d}_{c}^{s_{i}}(t) \in \mathbb{R}$, $\hat{d}_{c}^{g}(t) \in \mathbb{R}$ and $\hat{d}_{g}^{s_{i}}(t) \in \mathbb{R}$. Using these estimates and given the known position of the $i$th feature of the stationary object relative to the goal location, the position of the goal relative to the camera expressed in $\mathscr{C}$, $\hat{\underline{p}}_{c}^{g}$, can be estimated using
\begin{equation}\label{eq:goalPositionEstimate}
    \hat{\underline{p}}_{c}^{g}(t) = \hat{\underline{p}}_{c}^{s_{i}}(t) - R_{c}^{g}(t)\underline{p}_{g}^{s_{i}},
\end{equation}
where $\hat{\underline{p}}_{c}^{s_{i}}(t) \in \mathbb{R}^{3}$ denotes the estimate of the position of the $i$th feature on the object with respect to $\mathscr{C}$. Let $I_{s}$ denote the origin of the feature frame and select any three features out of the number of features on the plane that surrounds $I_{s}$ as depicted in Figure~\ref{fig:cameraFeatureTracking}. Let $\underline{n}_{s} \in \mathbb{R}^{3}$ represents the normal vector to the plane containing $s_{1}$, $s_{2}$, $s_{3}$, and $I_{s}$ which can be expressed as $\underline{n}_{s} = \left(\underline{p}_{w}^{s_{1}}-\underline{p}_{w}^{s_{2}}\right) \times \left(\underline{p}_{w}^{s_{3}}-\underline{p}_{w}^{s_{2}}\right)$,
 where $\underline{p}_{w}^{s_{1}}$, $\underline{p}_{w}^{s_{2}}$, and $\underline{p}_{w}^{s_{3}} \in \mathbb{R}^{3}$
 represents the
position of $s_{1}$, $s_{2}$ and $s_{3}$ expressed in $\mathscr{W}$, respectively, and the notation $\times$ represents the cross product. An optimal control problem is then formulated to generate the desired linear velocity commands $\underline{v}_{c}^{g}$ for the camera, online, to minimize the cost functional
\begin{equation} \label{eq:costFunction}
J(\underline{p}_{c}^{g}(\cdot), \underline{v}_{c}^{g}(\cdot)) = \int_{0}^\infty  r_{\scalemath{0.5}{LQR}}(\underline{p}_{c}^{g}(\tau),  \underline{v}_{c}^{g}(\tau)) + r_{\scalemath{0.5}{ORTHO}}\left(  \underline{v}_{c}^{g}(\tau)\right) d\tau,
\end{equation} 
 over the set $\mathcal{U}$ of piece-wise continuous functions and under the dynamic constraint in \eqref{eq:cameraDynamics}. The linear quadratic regulator (LQR) cost denoted by $r_{\scalemath{0.5}{LQR}}: \mathbb{R}^{3} \times \mathbb{R}^{3} \to \mathbb{R}$ is designed to drive the camera to the goal while the orthogonality cost denoted by $r_{\scalemath{0.5}{ORTHO}}: \mathbb{R}^{3} \to \mathbb{R}$ is designed to improve estimates of $\underline{p}_{c}^{g}$ by encouraging orthogonal motion of the camera to the plane that contains the features. The LQR cost $r_{\scalemath{0.5}{LQR}}$ is defined as $\label{eq:LQRCost}
   r_{\scalemath{0.5}{LQR}}(\underline{p}_{c}^{g}(t), \underline{v}_{c}^{g}(t)) \coloneqq \underline{p}_{c}^{g\top}(t)Q_{c}\underline{p}_{c}^{g}(t) +  \underline{v}_{c}^{g\top}(t)R_{c}\underline{v}_{c}^{g}(t)$,  
 where $Q_{c} \in \mathbb{R}^{3 \times 3}$ and $R_{c} \in \mathbb{R}^{3 \times 3}$ are constant positive definite symmetric matrices. The orthogonality cost $r_{\scalemath{0.5}{ORTHO}}$ is designed as $\label{eq:OrthoCost} r_{\scalemath{0.5}{ORTHO}}(\underline{v}_{c}^{g}(t)) \coloneqq \gamma_{c}\left(\langle \underline{v}_{c}^{g}(t), \underline{n}_{s} \rangle \right)^{2}$,  where $\gamma_{c} \in \mathbb{R}_{> 0}$ is a user-defined constant designed to maximize orthogonality of the motion of the camera relative to the feature plane. The goal to move the camera orthogonally relative to the feature plane is captured in \eqref{eq:costFunction} via minimization of the dot product $\langle \underline{v}_{c}^{g}(t), \underline{n}_{s}\rangle$.


\section{ICL-Based Observer Design}\label{section:ICLObserver}
An ICL update law is implemented to estimate the unknown distances $d_{c}^{s_{i}}(t)$, $d_{c}^{g}(t)$,  and $d_{g}^{s_{i}}$ by integrating \eqref{eq:d1Deriv}, \eqref{eq:d2Deriv} and \eqref{eq:d3Deriv}, respectively,  over a time delay $T \in \mathbb{R}_{> 0}$ to obtain
{\small\begin{equation}
    \begin{bmatrix}
        d_{c}^{s_{i}}(t) \\ d_{c}^{g}(t) 
    \end{bmatrix} - \begin{bmatrix}
        d_{c}^{s_{i}}(t-T) \\ d_{c}^{g}(t-T) 
    \end{bmatrix} =  - \int_{t-T}^{t} \begin{bmatrix}
        {\underline{u}_{c}^{s_{i}}}^{\top}(\tau) \\ {\underline{u}_{c}^{g}}^{\top}(\tau)
        \end{bmatrix}\underline{v}_{c}(\tau) d\tau, 
\end{equation}}for $t > T$. Substituting the relationship in equation \eqref{eq:distanceRegressor} at current time $t$ and previous time $t-T$ yields
\begin{equation}\label{eq:augDynamics}
    \mathcal{Y}_{s_{i}}(t)d_{g}^{s_{i}} = \mathcal{U}_{s_{i}}(t)
\end{equation}
where $\mathcal{Y}_{s_{i}} \coloneqq \begin{cases}
    0_{2 \times 1}, & t \leq T, \\
    Y_{s_{i}}(t)-Y_{s_{i}}(t-T), & t > T, 
\end{cases}$
and $\mathcal{U}_{s_{i}}(t) \coloneqq \begin{cases}
    0_{2 \times 1}, & t \leq T, \\
    - \int_{t-T}^{t} \begin{bmatrix}
        {\underline{u}_{c}^{s_{i}}}^{\top}(\tau) \\ {\underline{u}_{c}^{g}}^{\top}(\tau)
        \end{bmatrix}\underline{v}_{c}(\tau) d\tau, & t > T.
\end{cases}$ Multiplying both sides of \eqref{eq:augDynamics} by the term $\mathcal{Y}_{s_{i}}^{\top}(t)$ yields
 \begin{equation}\label{eq:augDynamics2}
    \mathcal{Y}_{s_{i}}^{\top}(t)\mathcal{Y}_{s_{i}}(t)d_{g}^{s_{i}} = \mathcal{Y}_{s_{i}}^{\top}(t)\mathcal{U}_{s_{i}}(t)
\end{equation}
In general, $\mathcal{Y}_{s_{i}}(t)$ will not have full column rank (e.g. when the camera
is stationary) implying $\mathcal{Y}_{s_{i}}^{\top}(t)\mathcal{Y}_{s_{i}}(t)$ is positive semidefinite but not positive definite. However, the equality in \eqref{eq:augDynamics2} may be evaluated at several (possibly time-varying)  time instances $t_{1}, \hdots, t_{N}$ and summed together to yield
\begin{equation}\label{eq:CLdyanmics}
    \Sigma_{\mathcal{Y}_{s_{i}}}(t)d_{g}^{s_{i}} = \Sigma_{\mathcal{U}_{s_{i}}}(t)
\end{equation}
where $\Sigma_{\mathcal{Y}_{s_{i}}}(t) \coloneqq \sum_{j = 1}^{N} \mathcal{Y}_{s_{i}}^{\top}(t_{j}(t))\mathcal{Y}_{s_{i}}(t_{j}(t))$, $\Sigma_{\mathcal{U}_{s_{i}}}(t) \coloneqq \sum_{j = 1}^{N} \mathcal{Y}_{s_{i}}^{\top}(t_{j}(t))\mathcal{U}_{s_{i}}(t_{j}(t))$, and $N \in \mathbb{Z}_{\geq 1}$. The following assumption is an observability-like condition that must be satisfied to guarantee convergence of distance estimates in finite time.
\begin{assumption}\label{ass:sufficientExcitation}
    The camera has sufficiently rich motion so that there exist constants $\tau \in \mathbb{R}_{>T}$ and $\lambda_{\tau} \in \mathbb{R}_{> 0}$ such that for all $t \geq \tau$, $\lambda_{\min} \{\Sigma_{\mathcal{Y}_{s_{i}}}(t)
    \} > \lambda_{\tau}$, where $\lambda_{\min}\{\cdot\}$ denotes the minimum eigenvalue of $\{\cdot\}$.
\end{assumption}
\begin{remark}
    Assumption~\ref{ass:sufficientExcitation} can be verified online and is easy to satisfy provided the trajectories contain sufficient information to make $\mathcal{Y}_{s_{i}}$ sufficiently exciting on a finite interval\cite{SCC.Parikh.Kamalapurkar.ea2015, SCC.Bell.Parikh.ea2016, SCC.Bell.Deptula.ea2020}.
\end{remark}

The time $\tau$ is unknown; however, it can be determined online by checking the minimum eigenvalue of $\Sigma_{\mathcal{Y}_{s_{i}}}(t)$. After $t =\tau$, $\lambda_{\min}\{\Sigma_{\mathcal{Y}_{s_{i}}}(t)
\} > \lambda_{\tau}$ implies that the constant unknown distance $d_{g}^{s_{i}}$ can be determined from \eqref{eq:CLdyanmics} and obtained as
$\label{eq:unknownDistsk}
d_{g}^{s_{i}} = \begin{cases}
    0,& t < \tau, \\
    \Sigma_{\mathcal{Y}_{s_{i}}}^{-1}(t)\Sigma_{\mathcal{U}_{s_{i}}}(t), & t \geq \tau.
\end{cases}$ Substituting this expression into \eqref{eq:distanceRegressor} yields $\label{eq:unknowndc}
\begin{bmatrix}
    d_{c}^{s_{i}}(t) \\ d_{c}^{g}(t)
\end{bmatrix} =  \begin{cases}
    0,& t < \tau, \\
    Y_{s_{i}}(t)\Sigma_{\mathcal{Y}_{s_{i}}}^{-1}(t)\Sigma_{\mathcal{U}_{s_{i}}}(t), & t \geq \tau.
\end{cases}$ Based on subsequent stability analysis, ICL update laws to generate the estimates $\hat{d}_{c}^{s_{i}}(t)$, $\hat{d}_{c}^{g}(t)$, and $\hat{d}_{g}^{s_{i}}$ are designed as
\begin{equation}\label{eq:d1H}
    \dot{\hat{d}}_{c}^{s_{i}}(t) {\coloneqq} \begin{cases}
        \eta_{s_{i}, 1}(t), & t < \tau, \\
        \eta_{s_{i}, 1}(t) + \kappa_{1}\left(\nu_{s_{i}, 1}(t)- \hat{d}_{c}^{s_{i}}(t)\right), & t \geq \tau,
    \end{cases}
\end{equation}
\begin{equation}\label{eq:d2H}
    \dot{\hat{d}}_{c}^{g}(t) {\coloneqq} 
    \begin{cases}
        \eta_{s_{i}, 2}(t), & t < \tau, \\
        \eta_{s_{i}, 2}(t) + \kappa_{2}\left(\nu_{s_{i}, 2}(t)- \hat{d}_{c}^{g}(t)\right), & t \geq \tau,
    \end{cases}
\end{equation}
and
\begin{equation}\label{eq:d3H}
    \dot{\hat{d}}_{g}^{s_{i}}(t) \coloneqq \begin{cases}
        0, & t < \tau, \\
        \kappa_{3}\left(\Sigma_{\mathcal{Y}_{s_{i}}}^{-1}(t)\Sigma_{\mathcal{U}_{s_{i}}}(t) -\hat{d}_{g}^{s_{i}}(t)\right), & t \geq \tau,
    \end{cases}
\end{equation}
respectively, where $\eta_{s_{i}}(t) \coloneqq -\begin{bmatrix}
{\underline{u}_{c}^{s_{i}}}^{\top}(t) \\ {\underline{u}_{c}^{g}}^{\top}(t)
\end{bmatrix}\underline{v}_{c}(t)$, $\nu_{s_{i}}(t) \coloneqq Y_{s_{i}}(t)\Sigma_{\mathcal{Y}_{s_{i}}}^{-1}(t)\Sigma_{\mathcal{U}_{s_{i}}}(t)$, and $\kappa_{1} \in \mathbb{R}_{> 0}$, $\kappa_{2}\in \mathbb{R}_{> 0}$, and $\kappa_{3}\in \mathbb{R}_{> 0}$ are user-selected gains. Let $\tilde{d}_{c}^{s_{i}}(t) \in \mathbb{R}, \tilde{d}_{c}^{g}(t) \in \mathbb{R}$ and $\tilde{d}_{g}^{s_{i}}(t) \in \mathbb{R}$ represent the distance estimation errors defined as $\tilde{d}_{c}^{s_{i}}(t) \coloneqq d_{c}^{s_{i}}(t) - \hat{d}_{c}^{s_{i}}(t)$, $\tilde{d}_{c}^{g}(t) \coloneqq d_{c}^{g}(t) - \hat{d}_{c}^{g}(t)$ and $\tilde{d}_{g}^{s_{i}}(t) \coloneqq d_{g}^{s_{i}} - \hat{d}_{g}^{s_{i}}(t)$, respectively. Taking their derivatives and substituting the dynamics in \eqref{eq:d1Deriv}, \eqref{eq:d2Deriv}, and \eqref{eq:d3Deriv} and update laws in \eqref{eq:d1H}, \eqref{eq:d2H}, and \eqref{eq:d3H} yields 
\begin{equation}\label{eq:d1Error}
    \dot{\tilde{d}}_{c}^{s_{i}}(t) {\coloneqq} \begin{cases}
        0, & t < \tau, \\
        -\kappa_{1}\tilde{d}_{c}^{s_{i}}(t), & t \geq \tau,
    \end{cases}
\end{equation}
\begin{equation}\label{eq:d2Error}
    \dot{\tilde{d}}_{c}^{g}(t) {\coloneqq} \begin{cases}
        0, & t < \tau, \\
        -\kappa_{2}\tilde{d}_{c}^{g}(t), & t \geq \tau,
    \end{cases}
\end{equation}
and
\begin{equation}\label{eq:d3Error}
    \dot{\tilde{d}}_{g}^{s_{i}}(t) \coloneqq \begin{cases}
        0, & t < \tau, \\
        -\kappa_{3}\tilde{d}_{g}^{s_{i}}(t), & t \geq \tau,
    \end{cases}
\end{equation}
The subsequent analysis in Section~\ref{section:stabilityAnalysis} shows that the error $\tilde{d}_{c}^{s_{i}}$ remains bounded for $t < \tau$ and decays exponentially for $t \geq \tau$, once sufficient data has been gathered. 

\section{Design of feature observability maximizing Velocity}\label{section:controlDesign}

This section presents an analytical solution to the optimal control problem in \eqref{eq:costFunction} using estimates of the position of the goal relative to the camera $\hat{\underline{p}}_{c}^{g}(t)$ obtained from the results of the observer in Section~\ref{section:ICLObserver}. The Hamilton-Jacobi-Bellman (HJB) equation for the optimal control problem in \eqref{eq:costFunction} can be expressed in the form,
{\small\begin{multline}\label{eq:HJB} 
0 = \min_{ \underline{v}_{c}^{g}} \bigg\{J^{*\prime}(\underline{p}_{c}^{g}) \underline{v}_{c}^{g}(\underline{p}_{c}^{g}) + \underline{p}_{c}^{g\top}Q_{c}\underline{p}_{c}^{g} 
+  {{\underline{v}_{c}^{g*\top}}}(\underline{p}_{c}^{g})R_{c} \underline{v}_{c}^{g*}(\underline{p}_{c}^{g}) \\ + \gamma_{c}\left(\langle\underline{v}_{c}^{g*}(\underline{p}_{c}^{g}), \underline{n}_{s}\rangle\right)^{2} \bigg\},
\end{multline}}where $J^{*}: \mathbb{R}^{3} \to \mathbb{R}$ is the optimal cost-to-go. Since the position dynamics in \eqref{eq:cameraDynamics} are linear and the cost in \eqref{eq:costFunction} is quadratic, the optimal cost-to-go is given by $\label{eq:valuefunction} J^{*}(\underline{p}_{c}^{g}) \coloneqq \underline{p}_{c}^{g\top}S_{c}\underline{p}_{c}^{g}$, where $S_{c} \in \mathbb{R}^{3 \times 3}$ is a constant positive
definite symmetric matrix, and the notation  $\left(\cdot\right)^{\prime}$ is used to denote $\frac{\partial}{\partial \left(\cdot\right)}$. The optimal control policy, denoted by $ \underline{v}_{c}^{g*}: \mathbb{R}^{3} \to \mathbb{R}^{3}$, is given as
\begin{equation}\label{eq:optimalcontrol}
     \underline{v}_{c}^{g*}(\underline{p}_{c}^{g}(t)) = -\overline{R}_{c}^{-1}S_{c}\underline{p}_{c}^{g}(t),
\end{equation}
where $\overline{R}_{c} \in \mathbb{R}^{3 \times 3}$ is a positive definite matrix defined as $\overline{R}_{c} \coloneqq R_{c}+\gamma_{c}N_{s}$ and $N_{s} \in \mathbb{R}^{3 \times 3}$ is a positive semi-definite symmetric defined as $N_{s} \coloneqq \underline{n}_{s}\underline{n}_{s}^{\top}$.
Since the matrix $\overline{R}_{c}$ is the sum of a symmetric positive definite matrix and a symmetric positive semi-definite matrix, it is also symmetric and positive definite. Substituting the \eqref{eq:optimalcontrol} back into the HJB \eqref{eq:HJB} equation and simplifying yields the following necessary and sufficient condition for optimality
\begin{equation}\label{eq:are}
    -S_{c}\overline{R}_{c}^{-1}S_{c} + Q_{c} = 0,
\end{equation}
where the objective is to find the matrix $S_{c}$. 
Given symmetric positive semi-definite matrices $\overline{R}_{c}$, $Q_{c}$ and $S_{c}$, the solution to the quadratic equation in \eqref{eq:are} is unique and is given as 
\begin{equation}\label{eq:ScSolution}
S_{c} = \overline{R}_{c}^{1/2}(\overline{R}_{c}^{-1/2}Q_{c}\overline{R}_{c}^{-1/2})^\frac{1}{2}\overline{R}_{c}^{1/2}.
\end{equation}
Since $\underline{p}_{c}^{g}(t)$ is unknown, the linear velocity of the camera is subsequently designed using the estimate $\hat{\underline{p}}_{c}^{g}(t)$ as 
\begin{equation}\label{eq:uControl}
    \underline{v}_{c}(t) \coloneqq -\hat{\underline{v}}_{c}^{g}(t) = K_{s}\hat{\underline{p}}_{c}^{g}(t),
\end{equation}
{where $K_{s} \in \mathbb{R}^{3 \times 3}$ is the feedback gain defined as $K_{s} \coloneqq \overline{R}_{c}^{-1}S_{c}$. The velocity $\underline{v}_{c}(t)$, when represented in $\mathscr{W}$, is denoted by $\underline{v}_{w}^{c}(t) \in \mathbb{R}^{3}$ and given by $\underline{v}_{w}^{c}(t) =  K_{s}R_{w}^{c}(t)\hat{\underline{p}}_{c}^{g}(t)$ where $R_{w}^{c}(t)$ is the orientation of $\mathscr{C}$ with respect to $\mathscr{W}$.  


\section{Stability Analysis}\label{section:stabilityAnalysis}

This section presents the main theoretical results of this paper. 
First, the convergence properties of the proposed observers in Section~\ref{section:ICLObserver} are presented, and finally, the convergence of the position error trajectory $\underline{p}_{c}^{g}(t)$ to a given neighborhood of the origin is presented.

\subsection{Analysis of Camera ICL Observer Error system}\label{section:analysisOfObsever}
Let $\tilde{\vartheta}(t) \in \mathbb{R}^{9}$ denote
a concatenated state vector containing the distance estimation errors, defined as $ \tilde{\vartheta}(t) \coloneqq \begin{bmatrix}
    \tilde{d}_{c}^{s_{i}}(t) & \tilde{d}_{c}^{g}(t) &  \tilde{d}_{g}^{s_{i}}(t)
\end{bmatrix}^{\top}$ and let $L:\mathbb{R}^{9}\to \mathbb{R}$ be a candidate Lyapunov function defined as
\begin{equation}\label{eq:distLyapFunc}
    L(\tilde{\vartheta}(t)) = \frac{1}{2}\tilde{\vartheta}^{\top}(t)\tilde{\vartheta}(t).
\end{equation}
The following theorem establishes the exponential stability of the observer error system obtained in \eqref{eq:d1Error}, \eqref{eq:d2Error}, and \eqref{eq:d3Error}.
\begin{theorem}\label{thm:expObserver}
    Provided Assumptions~\ref{ass:trackableFeatures}-\ref{ass:sufficientExcitation} hold, the update laws defined in \eqref{eq:d1H}, \eqref{eq:d2H}, and \eqref{eq:d3H} ensure that the origin of the observer error system is globally exponentially stable and the trajectories of the estimation errors $\tilde{\vartheta}(\cdot)$ converge exponentially to the origin.
\end{theorem}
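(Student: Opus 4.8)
The plan is to take the quadratic candidate $L$ in \eqref{eq:distLyapFunc} as a common Lyapunov function for the switched error system \eqref{eq:d1Error}--\eqref{eq:d3Error}, argue separately on the two phases $[0,\tau)$ and $[\tau,\infty)$, and then splice the two resulting bounds into a single global exponential estimate. First I would record that $L(\tilde{\vartheta}) = \tfrac{1}{2}\|\tilde{\vartheta}\|^{2}$ is positive definite and radially unbounded, with the exact sandwich $\tfrac{1}{2}\|\tilde{\vartheta}\|^{2} \le L(\tilde{\vartheta}) \le \tfrac{1}{2}\|\tilde{\vartheta}\|^{2}$, so that any exponential decay rate obtained for $L$ transfers directly to $\|\tilde{\vartheta}\|$, and define $\underline{\kappa} \coloneqq \min\{\kappa_{1},\kappa_{2},\kappa_{3}\} > 0$.

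On $[0,\tau)$, substituting \eqref{eq:d1Error}, \eqref{eq:d2Error}, and \eqref{eq:d3Error} into $\dot{L} = \tilde{\vartheta}^{\top}\dot{\tilde{\vartheta}}$ gives $\dot{L}(\tilde{\vartheta}(t)) = 0$; hence $\tilde{\vartheta}$ is constant on this interval and, by continuity of the solution at the switching instant, $\tilde{\vartheta}(\tau) = \tilde{\vartheta}(0)$, i.e.\ the estimation error merely remains bounded while the regressor data is accumulated. On $[\tau,\infty)$, the same substitution yields
\[
\dot{L}(\tilde{\vartheta}(t)) = -\kappa_{1}\bigl(\tilde{d}_{c}^{s_{i}}(t)\bigr)^{2} - \kappa_{2}\bigl(\tilde{d}_{c}^{g}(t)\bigr)^{2} - \kappa_{3}\bigl(\tilde{d}_{g}^{s_{i}}(t)\bigr)^{2} \le -2\underline{\kappa}\,L(\tilde{\vartheta}(t)),
\]
so the comparison lemma (or, equivalently, the closed-form solution of the three decoupled scalar linear equations) gives $\|\tilde{\vartheta}(t)\| \le \|\tilde{\vartheta}(\tau)\|\, e^{-\underline{\kappa}(t-\tau)}$ for $t \ge \tau$. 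Combining the two phases through $\tilde{\vartheta}(\tau) = \tilde{\vartheta}(0)$ — using $\|\tilde{\vartheta}(t)\| = \|\tilde{\vartheta}(0)\| \le e^{\underline{\kappa}\tau}\|\tilde{\vartheta}(0)\|e^{-\underline{\kappa}t}$ on $[0,\tau)$ and $\|\tilde{\vartheta}(t)\| \le \|\tilde{\vartheta}(0)\|e^{-\underline{\kappa}(t-\tau)} = e^{\underline{\kappa}\tau}\|\tilde{\vartheta}(0)\|e^{-\underline{\kappa}t}$ on $[\tau,\infty)$ — produces the uniform bound $\|\tilde{\vartheta}(t)\| \le e^{\underline{\kappa}\tau}\|\tilde{\vartheta}(0)\|e^{-\underline{\kappa}t}$ for all $t \ge 0$, which is exactly global exponential stability of the origin of the observer error system and, in particular, exponential convergence of $\tilde{\vartheta}(\cdot)$ to the origin.

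The computations are largely routine; the one point deserving care is the apparent tension between the word \emph{global} and the complete absence of contraction on $[0,\tau)$. This is resolved by the fact that, under Assumption~\ref{ass:sufficientExcitation}, $\tau$ is a fixed finite constant independent of the initial error, so it can be absorbed into the overshoot factor $e^{\underline{\kappa}\tau}$ without spoiling the $\|\tilde{\vartheta}(0)\|$-homogeneity of the estimate. A second, minor point is well-posedness of the switched system: the right-hand sides of \eqref{eq:d1Error}--\eqref{eq:d3Error} are piecewise continuous in $t$ and globally Lipschitz (indeed linear) in $\tilde{\vartheta}$, which guarantees existence, uniqueness, and continuity of the solution across $t=\tau$ and thus legitimizes the identity $\tilde{\vartheta}(\tau^{-})=\tilde{\vartheta}(\tau^{+})$ used in the splicing step.
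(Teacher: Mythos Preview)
Your proposal is correct and follows essentially the same approach as the paper: the same quadratic Lyapunov function, the same two-phase analysis with $\dot L \le 0$ on $[0,\tau)$ and $\dot L \le -2\underline{\kappa}L$ on $[\tau,\infty)$, and the same use of the comparison lemma. Your version is in fact a bit more complete, since you explicitly splice the two phases into a single bound $\|\tilde{\vartheta}(t)\| \le e^{\underline{\kappa}\tau}\|\tilde{\vartheta}(0)\|e^{-\underline{\kappa}t}$ valid for all $t\ge 0$ and address why the finite, state-independent $\tau$ is compatible with the word ``global,'' whereas the paper stops at the separate bounds and appeals to \cite[Theorem~4.10]{SCC.Khalil2002}.
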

\begin{proof}
    Taking the orbital derivative of the candidate Lyapunov function in \eqref{eq:distLyapFunc}, along the solutions of  \eqref{eq:d1Error}, \eqref{eq:d2Error}, and \eqref{eq:d3Error}, simplifying, and upper bounding, yields the inequality
    {\small\begin{equation}\label{eq:distLyapDeriv}
        \dot{L}(\tilde{\vartheta}(t)) \leq \begin{cases}
            0, & t < \tau,
            \\
            -2\kappa L(\tilde{\vartheta}(t)), & t \geq \tau,
        \end{cases}
    \end{equation}}where $\kappa = \min\{\kappa_{1}, \kappa_{2}, \kappa_{3}\}$. At $t < \tau$, it can be observed from \eqref{eq:distLyapFunc} and \eqref{eq:distLyapDeriv} that the distance estimation errors in $\tilde{\vartheta}(t)$ are non-increasing, specifically $\tilde{\vartheta}(t) \leq \vartheta(0), \forall t < \tau$. Invoking \cite[Theorem~4.10]{SCC.Khalil2002}, it can be concluded that the observer error system is exponentially stable and by the Comparison Lemma \cite[Lemma~3.4]{SCC.Khalil2002}, the bound $\|\tilde{\vartheta}(t)\| \leq \|\vartheta(\tau)\|e^{-\kappa(t-\tau)}$ holds for all $ t \geq \tau$.
\end{proof}

\subsection{Analysis of position error system}

To facilitate the following analysis, let $\Gamma_{s} \coloneqq S_{c} R_{c}^{-1}S_{c}$ and note that $\lambda_{\min}(\Gamma_{s})\|\underline{p}_{c}^{g}(t)\|^{2} \leq \underline{p}_{c}^{g\top}(t)\Gamma_{s}\underline{p}_{c}^{g}(t) \leq  \lambda_{\max}(\Gamma_{s})\|\underline{p}_{c}^{g}(t)\|^{2}$. Using the optimal cost-to-go function $J^{*}$ as the candidate Lyapunov function}, the following theorem establishes the input-to-state stability of the position error system.
\begin{theorem}\label{thm:ISSTrajectory}
    Provided Assumption~\ref{ass:trackableFeatures}-\ref{ass:sufficientExcitation} hold and the estimated distances $\hat{d}_{c}^{s_{i}}(\cdot)$, $\hat{d}_{c}^{g}(\cdot)$,  and $\hat{d}_{g}^{s_{i}}(\cdot)$ are updated using the update laws defined in \eqref{eq:d1H}, \eqref{eq:d2H}, and \eqref{eq:d3H} respectively such that the conditions of Theorem~\ref{thm:expObserver} are satisfied, then the system in \eqref{eq:cameraDynamics} is input-to-state stable with state $\underline{p}_{c}^{g}(\cdot)$ and input $\sqrt{\|\tilde{d}_{c}^{g}(\cdot)\|}$.
\end{theorem}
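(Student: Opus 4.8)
The plan is to recognize \eqref{eq:cameraDynamics}, closed with the certainty-equivalence velocity \eqref{eq:uControl}, as a nominally exponentially stable linear system driven by the depth-estimation error, and to certify an input-to-state stability estimate using the optimal cost-to-go $J^{*}(\underline p_c^g)=\underline p_c^{g\top}S_c\underline p_c^g$ as the ISS-Lyapunov function. First I would substitute $\underline v_c^g(t)=-K_s\hat{\underline p}_c^g(t)$ into $\dot{\underline p}_c^g=\underline v_c^g$ and write $\dot{\underline p}_c^g(t)=\underline v_c^{g*}(\underline p_c^g(t))-K_s\big(\hat{\underline p}_c^g(t)-\underline p_c^g(t)\big)$. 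Using \eqref{eq:goalPositionEstimate}, \eqref{eq:goalPosition}, and the fact that the bearing $\underline u_c^g(t)$ is measured, the estimation error reduces to a bounded, measurable multiple of $\tilde d_c^g(t)$ — concretely $\hat{\underline p}_c^g(t)-\underline p_c^g(t)=-\tilde d_c^g(t)\,\underline u_c^g(t)$, a vector of Euclidean norm $|\tilde d_c^g(t)|$ — so by Theorem~\ref{thm:expObserver} this disturbance is bounded for all $t$ and decays exponentially for $t\ge\tau$. Thus $\sqrt{\|\tilde d_c^g(\cdot)\|}$ is a legitimate, eventually vanishing exogenous input and the claim reduces to an ISS bound of the form $\|\underline p_c^g(t)\|\le\beta(\|\underline p_c^g(t_0)\|,t-t_0)+\gamma\big(\sup_{s}\sqrt{|\tilde d_c^g(s)|}\big)$ with $\beta\in\mathcal{KL}$, $\gamma\in\mathcal K$.

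Next I would take the orbital derivative of $J^{*}$ along the closed loop, $\dot J^{*}=J^{*\prime}(\underline p_c^g)\underline v_c^g$, split it using the decomposition above, and invoke the HJB equation \eqref{eq:HJB} together with the Riccati identity \eqref{eq:are} to cancel the nominal part $J^{*\prime}(\underline p_c^g)\underline v_c^{g*}(\underline p_c^g)$ against the nonnegative stage cost $r_{\scalemath{0.5}{LQR}}+r_{\scalemath{0.5}{ORTHO}}$. What survives is a negative quadratic term, bounded below using the $\Gamma_s$ sandwich recorded just before the theorem, plus a cross term linear in the disturbance, yielding $\dot J^{*}(\underline p_c^g(t))\le -\underline p_c^{g\top}(t)\Gamma_s\underline p_c^g(t)+c_1|\tilde d_c^g(t)|\,\|\underline p_c^g(t)\|$ for a constant $c_1>0$ built from $S_c$ and $\overline R_c$ (the $\underline u_c^g$ factor drops because it is a unit vector). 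This is $\le -\lambda_{\min}(\Gamma_s)\|\underline p_c^g\|^2+c_1|\tilde d_c^g|\,\|\underline p_c^g\|$ and is strictly negative whenever $\|\underline p_c^g\|>(c_1/\lambda_{\min}(\Gamma_s))|\tilde d_c^g|$.

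From that inequality I would first extract a uniform bound on the trajectory: the usual sublevel-set argument for the quadratic $J^{*}$ gives $\|\underline p_c^g(t)\|\le M:=\sqrt{\lambda_{\max}(S_c)/\lambda_{\min}(S_c)}\,\max\{\|\underline p_c^g(t_0)\|,\,(c_1/\lambda_{\min}(\Gamma_s))\sup_{s}|\tilde d_c^g(s)|\}$. Substituting $\|\underline p_c^g\|\le M$ into the cross term converts the perturbation into a state-independent one, $\dot J^{*}\le -a\,J^{*}+c_1 M\,|\tilde d_c^g(t)|$ with $a:=\lambda_{\min}(\Gamma_s)/\lambda_{\max}(S_c)$; the comparison lemma \cite[Lemma~3.4]{SCC.Khalil2002} then yields $J^{*}(\underline p_c^g(t))\le J^{*}(\underline p_c^g(t_0))e^{-a(t-t_0)}+(c_1M/a)\sup_{t_0\le s\le t}|\tilde d_c^g(s)|$. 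Taking square roots and using $\lambda_{\min}(S_c)\|\underline p_c^g\|^2\le J^{*}\le\lambda_{\max}(S_c)\|\underline p_c^g\|^2$ gives the advertised ISS estimate, whose perturbation term is proportional to $\sqrt{\sup_s|\tilde d_c^g(s)|}$; invoking the exponential decay of $\tilde\vartheta$, and hence of $\tilde d_c^g$, from Theorem~\ref{thm:expObserver} then also forces $\underline p_c^g(t)\to 0$.

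The main obstacle I expect is twofold. The structural point is producing the square-root input dependence without circularity: because the cross term in $\dot J^{*}$ is bilinear in $\|\underline p_c^g\|$ and $|\tilde d_c^g|$, one must first prove boundedness of $\underline p_c^g$ and only afterward turn the perturbation into a state-independent term, so that taking $\sqrt{J^{*}}$ at the end produces the $\sqrt{|\tilde d_c^g|}$ gain — the sublevel-set step is what makes this rigorous. The more delicate point is well-posedness of the whole construction, which tacitly requires $d_c^g(t)=\|\underline p_c^g(t)\|>0$ for all $t$ so that $H_{s_i}^{\top}H_{s_i}$ stays invertible, $Y_{s_i}$ and $\underline u_c^g$ stay defined, and Assumption~\ref{ass:sufficientExcitation} can be satisfied; one has to argue that the regulated camera approaches but never reaches the goal (or restrict to initial conditions and gains for which this holds), and carry the $t<\tau$ versus $t\ge\tau$ switch of the observer error through the disturbance bound, which is precisely why ISS rather than plain asymptotic stability is the right conclusion here.
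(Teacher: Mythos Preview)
Your approach coincides with the paper's: the same ISS-Lyapunov function $J^{*}$, the same decomposition of the closed loop into the nominal feedback plus a depth-error disturbance carried by the unit bearing, and the same dissipation bound $\dot J^{*}\le -2\lambda_{\min}(\Gamma_s)\|\underline p_c^g\|^{2}+2\lambda_{\max}(\Gamma_s)\|\underline p_c^g\|\,|\tilde d_c^g|$ (the paper writes the constant via completion of squares, you via HJB cancellation, but the inequality is identical). The paper then stops at the ``$\dot J^{*}\le -\lambda_{\min}(\Gamma_s)\|\underline p_c^g\|^{2}$ whenever $\|\underline p_c^g\|\ge\varrho(\sqrt{|\tilde d_c^g|})$'' step, invokes \cite[Theorem~4.19]{SCC.Khalil2002} directly for ISS, and cites \cite[Exercise~4.58]{SCC.Khalil2002} for the asymptotic convergence --- it never writes out an explicit $\beta$--$\gamma$ estimate.

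Your comparison-lemma detour is therefore extra, and as written it has a small flaw worth fixing: because your uniform bound $M$ depends on $\|\underline p_c^g(t_0)\|$, the term $\sqrt{(c_1M/a)\sup|\tilde d_c^g|}$ mixes initial condition and input and is not a clean $\gamma\big(\sup\sqrt{|\tilde d_c^g|}\big)$ gain. The ISS conclusion is already secured by your sublevel-set step (which is exactly the hypothesis of \cite[Theorem~4.19]{SCC.Khalil2002}), so you can drop the subsequent comparison argument, or, if you want an explicit bound, restart the estimate from the time the trajectory enters the invariant sublevel set so that $M$ depends only on $\sup|\tilde d_c^g|$.
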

\begin{proof}
    The orbital derivative of the optimal cost-to-go function is bounded as $\dot{J}(\underline{p}_{c}^{g}(t)) \leq -2\lambda_{\min}(\Gamma_{s})\|\underline{p}_{c}^{g}(t)\|^{2} + 2\lambda_{\max}(\Gamma_{s})\|\underline{p}_{c}^{g}(t)\|\|\tilde{\underline{p}}_{c}^{g}(t)\|$. Applying completion of squares and using the fact that $\sup_{t\in \mathbb{R}_{\geq 0}}\|\underline{u}_{c}^{g}(t)\| 
 \leq 1$ since $\underline{u}_{c}^{g}(t)$ is a unit vector, the
orbital derivative is bounded for all $t \geq 0$ as
$\dot{J}(\underline{p}_{c}^{g}(t)) \leq -\lambda_{\min}(\Gamma_{s})\|\underline{p}_{c}^{g}(t)\|^{2}, \forall \|\underline{p}_{c}^{g}(t)\| \geq \varrho\scalemath{0.9}{\left(\sqrt{\|\tilde{d}_{c}^{g}(\cdot)\|}\right)}
\label{eq.VDerivBound}$, 
  where $\varrho\scalemath{0.95}{\left(\sqrt{\|\tilde{d}_{c}^{g}(\cdot)\|}\right)} \coloneqq \sqrt{\frac{2\lambda_{\max}(\Gamma_{s})}{\lambda_{\min}(\Gamma_{s})}}\|\tilde{d}_{c}^{g}(\cdot)\|$. Therefore, the conditions of \cite[Theorem~4.19]{SCC.Khalil2002} are satisfied and can be concluded that the system in \eqref{eq:cameraDynamics} is input-to-state stable with state $\underline{p}_{c}^{g}(\cdot)$ and input $\sqrt{\|\tilde{d}_{c}^{g}(\cdot)\|}$. Since the distance error $\tilde{d}_{c}^{g}(\cdot)$ converges exponentially to the origin according to Theorem~\ref{thm:expObserver}, the results of \cite[Exercise~4.58]{SCC.Khalil2002} can be used to show that as $t \to \infty$ and the input $\sqrt{\|\tilde{d}_{c}^{g}(\cdot)\|}$ converges to zero, so does the state $\underline{p}_{c}^{g}(\cdot)$.
\end{proof}


\section{Simulation Study}
    \label{section:simulation}
     To demonstrate the performance of the developed observers and to test the effects of the orthogonality cost $r_{\scalemath{0.5}{ORTHO}}$ on feature observability, consider a monocular camera with dynamics as defined in \eqref{eq:cameraDynamics}, which is tracking four co-planar features on a stationary object as described in Figure~\ref{fig:cameraFeatureTracking}. The control objective is to move the camera from the initial position to the goal location using the control policy in \eqref{eq:uControl}, which uses estimates of the position of the camera obtained from the ICL-based observers developed Section~\ref{section:ICLObserver}. The simulation parameters are omitted for brevity of the paper and are available in the \textit{arXiv} version of this paper.

\begin{figure}
    \centering
     \begin{tikzpicture}[every pin/.style={fill=white}]
    \begin{axis}[
        xlabel={$t$ [s]},
        ylabel={$\tilde{d}_{c}^{s_{i}}(t)$},
        legend pos=outer north east,
        legend style={nodes={scale=0.6, transform shape}},
        enlarge y limits=0,
        enlarge x limits=0,
        height=0.35\columnwidth,
        width=0.8\columnwidth,
        label style={font=\scriptsize},
        tick label style={font=\scriptsize},
        name=mainplot
    ]
    \pgfplotsinvokeforeach{1,...,4}{
        \addplot+ [thick, mark=none, color=color#1] table [x index=0, y index=#1] {data/dicTildeData.dat};
        \addlegendentry{$\tilde{d}_{c}^{s_{#1}}(t)$}
    }
    \end{axis}
\end{tikzpicture}
    \caption{Trajectory of the distance error of the features of the object relative to the camera.}
    \label{fig:dicTilde}
\end{figure}
\begin{figure}
    \centering
     \begin{tikzpicture}
    \begin{axis}[
        xlabel={$t$ [s]},
        ylabel={$\tilde{d}_{g}^{s_{i}}(t)$},
        legend pos= outer north east,
        legend style={nodes={scale=0.6, transform shape}},
        enlarge y limits=0.1,
        enlarge x limits=0,
        height = 0.35\columnwidth,
        width = 0.8\columnwidth,
        label style={font=\scriptsize},
        tick label style={font=\scriptsize}
    ]
    \pgfplotsinvokeforeach{1,...,4}{
        \addplot+ [thick,color=color#1, mark=none] table [x index=0, y index=#1] {data/didTildeData.dat};
        \addlegendentry{$\tilde{d}_{g}^{s_{#1}}(t)$}
    }
    \end{axis}
\end{tikzpicture}
    \caption{Trajectory of the distance error of the features of the object relative to the goal.}
    \label{fig:digTilde}
\end{figure}
\begin{figure}
    \centering
     \begin{tikzpicture}
    \begin{axis}[
        xlabel={$t$ [s]},
        ylabel={$\tilde{d}_{c}^{g}(t)$},
        legend pos= outer north east,
        legend style={nodes={scale=0.75, transform shape}},
        enlarge y limits=0.1,
        enlarge x limits=0,
        height = 0.35\columnwidth,
        width = 0.8\columnwidth,
        label style={font=\scriptsize},
        tick label style={font=\scriptsize}
    ]
    \pgfplotsinvokeforeach{1}{
        \addplot+ [thick, mark=none] table [x index=0, y index=#1] {data/ddcTildeData.dat};
    }
    
    \legend{$\tilde{d}_{c}^{g}(t)$}
    \end{axis}
\end{tikzpicture}
    \caption{Trajectory of the distance error of the goal relative to the camera.}
    \label{fig:dgcTilde}
\end{figure}
\begin{figure}
    \centering
     \begin{tikzpicture}
    \begin{axis}[
        xlabel={$t$ [s]},
        ylabel={$\underline{p}_{c}^{g}(t)$},
        legend pos= outer north east,
        legend style={nodes={scale=0.75, transform shape}},
        enlarge y limits=0.1,
        enlarge x limits=0,
        height = 0.35\columnwidth,
        width = 0.8\columnwidth,
        label style={font=\scriptsize},
        tick label style={font=\scriptsize}
    ]
    \pgfplotsinvokeforeach{1}{
        \addplot+ [thick, mark=none, color=blue] table [x index=0, y index=#1] {data/Pdcw1HistData.dat};
    }
    \pgfplotsinvokeforeach{1}{
        \addplot+ [thick, mark=none, color=green!50!black] table [x index=0, y index=#1] {data/Pdcw2HistData.dat};
    }

    \pgfplotsinvokeforeach{1}{
        \addplot+ [thick, mark=none, color=purple] table [x index=0, y index=#1] {data/Pdcw3HistData.dat};
    }
    \legend{$\underline{p}_{c_{x}}^{g}(t)$, $\underline{p}_{c_{y}}^{g}(t)$,  $\underline{p}_{c_{z}}^{g}(t)$} 
    \end{axis}
\end{tikzpicture}
    \caption{The trajectories of the actual goal position relative to the camera expressed in $\mathscr{W}$.} 
    \label{fig:pgc}
\end{figure}

\subsection{Results}
From Figures~\ref{fig:dicTilde}, \ref{fig:digTilde}, and \ref{fig:dgcTilde}, it can be observed that the trajectories of the distance errors converge exponentially to the origin which is consistent with the results of Theorem~\ref{thm:expObserver}. 
 Similarly, it can be observed in Figure~\ref{fig:pgcHat} and in Figure~\ref{fig:pgc} that the trajectories of the actual and estimated position of the goal relative to the camera $\underline{p}_{c}^{g}(\cdot)$ and $\hat{\underline{p}}_{c}^{g}(\cdot)$, respectively, decreases and eventually converges to the origin which is consistent with the results of Theorem~\ref{thm:ISSTrajectory}. 

\begin{figure}
    \centering
     \begin{tikzpicture}
    \begin{axis}[
        xlabel={$t$ [s]},
        ylabel={$\hat{\underline{p}}_{c}^{g}(t)$},
        legend pos= outer north east,
        legend style={nodes={scale=0.75, transform shape}},
        enlarge y limits=0.1,
        enlarge x limits=0,
        height = 0.35\columnwidth,
        width = 0.8\columnwidth,
        label style={font=\scriptsize},
        tick label style={font=\scriptsize}
    ]
    \pgfplotsinvokeforeach{1}{
        \addplot+ [thick, solid, mark=none, color=red] table [x index=0, y index=#1] {data/Pdcw1HatHistData.dat};
    }
    \pgfplotsinvokeforeach{1}{
        \addplot+ [thick, solid, mark=none, color=orange] table [x index=0, y index=#1] {data/Pdcw2HatHistData.dat};
    }

    \pgfplotsinvokeforeach{1}{
        \addplot+ [thick, solid, mark=none, color=purple!50!black] table [x index=0, y index=#1] {data/Pdcw3HatHistData.dat};
    }
    
    \legend{$\hat{\underline{p}}_{c_{x}}^{g}(t)$, $\hat{\underline{p}}_{c_{y}}^{g}(t)$, $\hat{\underline{p}}_{c_{z}}^{g}(t)$}
    \end{axis}
\end{tikzpicture}
    \caption{The trajectories of the estimated goal position relative to the camera expressed in $\mathscr{W}$.} 
    \label{fig:pgcHat}
\end{figure}

\begin{table}[h!]
    \centering
    \caption{Effect of Varied Orthogonality Penalty Gains ($\gamma_{c}$) on Regressor Conditioning.}
    \label{tab:featureObsTable}
    \begin{tabular}{p{1.75cm}|p{1cm}p{1cm}p{1cm}p{1cm}p{1cm}p{1cm}p{1cm}} 
  \hline 
 $\gamma_{c}$  & 0 & 5 &  10 & 15  & 25 & 50  \\
 \hline
 Avg. Cond. no & 16.289 & 9.906 & 6.199 & 5.239 &  3.279 & 2.718\\
 \hline
  \hline
    \end{tabular}%
\end{table}

Table~\ref{tab:featureObsTable} illustrates how the conditioning of the regressor $\Sigma_{\mathcal{Y}{s{i}}}$, as defined in \eqref{eq:CLdyanmics}, varies with increasing $\gamma_{c}$ values. A lower condition number implies better numerical stability while estimating the Euclidean distance to the features, maximizing feature observability. Conversely, a higher condition indicates heightened sensitivity to measurement errors, indicating that the regressor is poorly conditioned, resulting in less accurate and reliable estimates. The result of  Table~\ref{tab:featureObsTable} demonstrates the impact of the added cost $r_{\scalemath{0.5}{ORTHO}}$ on obtaining better scale estimates; however, increasing the camera gain $\gamma_c$ beyond a certain threshold can negatively affect the performance of the controller in achieving its objective of reaching the goal position. A careful choice of the parameter $\gamma_{c}$ allows for the right tradeoff between maximizing feature observability and achieving the goal, an important consideration in real-time systems, where the camera's objective to reach the goal must be balanced with the need to observe the features of landmarks within the operating environment.

\section{Conclusion}\label{section:conclusion}
This paper develops a technique to plan trajectories for a monocular camera to maximize the observability of the features of a stationary object by formulating an optimal control problem whose objective is to reach a goal location while using estimates generated by ICL-based observers. The developed method does not require the
positive depth constraint, which requires that the distance from the
focal point of the camera to the target along the axis perpendicular to the
image plane must remain positive, or the PE condition, which is difficult to satisfy in practice. As evidenced by the results described in Table~\ref{tab:featureObsTable}, noticeable improvements are obtained due to the added orthogonality cost designed to maximize observability. Future work will involve extending these results to nonplanar features on an object, as well as multiple objects that are non-stationary. 

\small
 
\bibliographystyle{IEEETrans.bst}
\bibliography{scc,sccmaster,scctemp}

\begin{thebibliography}{10}
\def\url#1{}
\csname url@samestyle\endcsname
\providecommand{\newblock}{\relax}
\providecommand{\bibinfo}[2]{#2}
\providecommand{\BIBentrySTDinterwordspacing}{\spaceskip=0pt\relax}
\providecommand{\BIBentryALTinterwordstretchfactor}{4}
\providecommand{\BIBentryALTinterwordspacing}{\spaceskip=\fontdimen2\font plus
\BIBentryALTinterwordstretchfactor\fontdimen3\font minus \fontdimen4\font\relax}
\providecommand{\BIBforeignlanguage}[2]{{%
\expandafter\ifx\csname l@#1\endcsname\relax
\typeout{** WARNING: IEEEtran.bst: No hyphenation pattern has been}%
\typeout{** loaded for the language `#1'. Using the pattern for}%
\typeout{** the default language instead.}%
\else
\language=\csname l@#1\endcsname
\fi
#2}}
\providecommand{\BIBdecl}{\relax}
\BIBdecl

\bibitem{SCC.Yu.Sharma.ea2013}
H.~Yu, R.~Sharma, R.~W. Beard, and C.~N. Taylor, ``Observability-based local path planning and obstacle avoidance using bearing-only measurements,'' \emph{Robot. Autom. Sys.}, vol.~61, no.~12, pp. 1392--1405, 2013.

\bibitem{SCC.Delaune.Bayard.ea2021}
J.~Delaune, D.~S. Bayard, and R.~Brockers, ``Range-visual-inertial odometry: Scale observability without excitation,'' \emph{IEEE Robot. Autom. Lett.}, vol.~6, no.~2, pp. 2421--2428, 2021.

\bibitem{SCC.Dubbelman.Browning.2015}
G.~Dubbelman and B.~Browning, ``{COP-SLAM}: Closed-form online pose-chain optimization for visual {SLAM},'' \emph{IEEE Trans. Robot.}, vol.~31, no.~5, pp. 1194--1213, 2015.

\bibitem{SCC.Mur.Montiel.ea2015}
R.~Mur-Artal, J.~M.~M. Montiel, and J.~D. Tardos, ``{ORB-SLAM}: a versatile and accurate monocular {SLAM} system,'' \emph{IEEE Trans. Robot.}, vol.~31, no.~5, pp. 1147--1163, 2015.

\bibitem{SCC.Taketomi.Uchiyama.ea2017}
T.~Taketomi, H.~Uchiyama, and S.~Ikeda, ``Visual {SLAM} algorithms: A survey from 2010 to 2016,'' \emph{IPSJ Trans. Comput. Vis. Appl.}, vol.~9, no.~1, pp. 1--11, 2017.

\bibitem{SCC.Karrer.Shmuck.ea2018}
M.~Karrer, P.~Schmuck, and M.~Chli, ``{CVI-SLAM}—collaborative visual-inertial {SLAM},'' \emph{IEEE Robot. Autom. Lett.}, vol.~3, no.~4, pp. 2762--2769, 2018.

\bibitem{SCC.Hartley.Zisserman2003}
R.~I. Hartley and A.~Zisserman, \emph{Multiple view geometry in computer vision}.\hskip 1em plus 0.5em minus 0.4em\relax Cambridge University Press, 2003.

\bibitem{SCC.Ma.Soatto.ea2004}
Y.~Ma, S.~Soatto, J.~Kosecka, and S.~S. Sastry, \emph{An invitation to 3-{D} vision}.\hskip 1em plus 0.5em minus 0.4em\relax Springer, 2004.

\bibitem{SCC.Matthies.Kanade.ea1989}
L.~Matthies, T.~Kanade, and R.~Szeliski, ``Kalman filter-based algorithm for estimating depth from image sequences,'' \emph{Int. J. Comput. Vis.}, vol.~3, pp. 209--236, 1989.

\bibitem{SCC.Jankovic.Ghosh1995}
M.~Jankovic and B.~K. Ghosh, ``Visually guided ranging from observations points, lines and curves via an identifier based nonlinear observer,'' \emph{Syst. Control Lett.}, vol.~25, no.~1, pp. 63--73, 1995.

\bibitem{SCC.Soatto.Frezza.ea1996}
S.~Soatto, R.~Frezza, and P.~Perona, ``Motion estimation via dynamic vision,'' \emph{IEEE Trans. Autom. Control}, vol.~41, no.~3, pp. 393--413, 1996.

\bibitem{SCC.Kano.Ghosh.ea2001}
H.~Kano, B.~K. Ghosh, and H.~Kanai, ``Single camera based motion and shape estimation using extended {K}alman filtering,'' \emph{Math. Comput. Modell.}, vol.~34, pp. 511--525, 2001.

\bibitem{SCC.Chiuso.Favaro.ea2002}
A.~Chiuso, P.~Favaro, H.~Jin, and S.~Soatto, ``Structure from motion causally integrated over time,'' \emph{IEEE Trans. Pattern Anal. Mach. Intell.}, vol.~24, no.~4, pp. 523--535, Apr. 2002.

\bibitem{SCC.Dixon.Fang.ea2003}
W.~E. Dixon, Y.~Fang, D.~M. Dawson, and T.~J. Flynn, ``Range identification for perspective vision systems,'' \emph{IEEE Trans. Autom. Control}, vol.~48, pp. 2232--2238, 2003.

\bibitem{SCC.Chen.Kano2004}
X.~Chen and H.~Kano, ``State observer for a class of nonlinear systems and its application to machine vision,'' \emph{IEEE Trans. Autom. Control}, vol.~49, no.~11, pp. 2085--2091, 2004.

\bibitem{SCC.Karagiannis.Astolfi2005}
D.~Karagiannis and A.~Astolfi, ``A new solution to the problem of range identification in perspective vision systems,'' \emph{IEEE Trans. Autom. Control}, vol.~50, no.~12, pp. 2074--2077, 2005.

\bibitem{SCC.Braganza.Dawson.ea2007}
D.~Braganza, D.~M. Dawson, and T.~Hughes, ``Euclidean position estimation of static features using a moving camera with known velocities,'' in \emph{Proc. IEEE Conf. Decis. Control}, New Orleans, LA, USA, Dec. 2007, pp. 2695--2700.

\bibitem{SCC.DeLuca.Oriolo.ea2008}
A.~De~Luca, G.~Oriolo, and P.~Robuffo~Giordano, ``Feature depth observation for image-based visual servoing: theory and experiments,'' \emph{Int. J. Robot. Res.}, vol.~27, no.~10, pp. 1093--1116, 2008.

\bibitem{SCC.Hu.Aiken.ea2008}
G.~Hu, D.~Aiken, S.~Gupta, and W.~E. Dixon, ``Lyapunov-based range identification for a paracatadioptric system,'' \emph{IEEE Trans. Autom. Control}, vol.~53, no.~7, pp. 1775--1781, 2008.

\bibitem{SCC.Morbidi.Prattichizzo.ea2009}
F.~Morbidi and D.~Prattichizzo, ``Range estimation from a moving camera: an immersion and invariance approach,'' in \emph{Proc. IEEE Int. Conf. Robot. Autom.}, IEEE.\hskip 1em plus 0.5em minus 0.4em\relax IEEE, 2009, pp. 2810--2815.

\bibitem{SCC.Zarrouati.Aldea.ea2012}
N.~Zarrouati, E.~Aldea, and P.~Rouchon, ``So(3)-invariant asymptotic observers for dense depth field estimation based on visual data and known camera motion,'' in \emph{Proc. Am. Control Conf.}, Fairmont Queen Elizabeth, Montreal, Canada, Jun. 2012, pp. 4116--4123.

\bibitem{SCC.Dani.Fischer.ea2012}
A.~Dani, N.~Fischer, Z.~Kan, and W.~E. Dixon, ``Globally exponentially stable observer for vision-based range estimation,'' \emph{Mechatronics}, vol.~22, no.~4, pp. 381--389, 2012, special Issue on Visual Servoing.

\bibitem{SCC.Dani.Fischer.ea2011}
A.~P. Dani, N.~R. Fischer, and W.~E. Dixon, ``Single camera structure and motion,'' \emph{IEEE Trans. Autom. Control}, vol.~57, no.~1, pp. 238--243, 2011.

\bibitem{SCC.Bell.Chen.ea2017}
Z.~I. Bell, H.-Y. Chen, A.~Parikh, and W.~E. Dixon, ``Single scene and path reconstruction with a monocular camera using integral concurrent learning,'' in \emph{Proc. IEEE Conf. Decis. Control}, IEEE.\hskip 1em plus 0.5em minus 0.4em\relax IEEE, 2017, pp. 3670--3675.

\bibitem{SCC.Reif.Unbehauen1999}
K.~Reif and R.~Unbehauen, ``The extended {K}alman filter as an exponential observer for nonlinear systems,'' \emph{IEEE Trans. Signal Process.}, vol.~47, no.~8, pp. 2324--2328, 1999.

\bibitem{SCC.Boutayeb.Rafaralahy.ea1997}
M.~Boutayeb, H.~Rafaralahy, and M.~Darouach, ``Convergence analysis of the extended {K}alman filter used as an observer for nonlinear deterministic discrete -time systems,'' \emph{IEEE Trans. Autom. Control}, vol.~42, no.~4, pp. 581--586, 1997.

\bibitem{SCC.Bell.Deptula.ea2020}
Z.~I. Bell, P.~Deptula, E.~A. Doucette, J.~W. Curtis, and W.~E. Dixon, ``Simultaneous estimation of {E}uclidean distances to a stationary object's features and the {E}uclidean trajectory of a {M}onocular camera,'' \emph{IEEE Trans. Autom. Control}, vol.~66, no.~9, pp. 4252--4258, 2020.

\bibitem{SCC.Chowdhary.Johnson2011}
G.~V. Chowdhary and E.~N. Johnson, ``Theory and flight-test validation of a concurrent-learning adaptive controller,'' \emph{J. Guid. Control Dynam.}, vol.~34, no.~2, pp. 592--607, Mar. 2011.

\bibitem{SCC.Chowdhary.Muhlegg2013}
G.~Chowdhary, M.~M{\"u}hlegg, J.~P. How, and F.~Holzapfel, ``Concurrent learning adaptive model predictive control,'' in \emph{Proc. AIAA Guid. Navig. Control Conf.}, Springer.\hskip 1em plus 0.5em minus 0.4em\relax American Institute of Aeronautics and Astronautics, 2013, pp. 29--47.

\bibitem{SCC.Chowdhary.Yucelen.ea2013}
G.~Chowdhary, T.~Yucelen, M.~M{\"{u}}hlegg, and E.~N. Johnson, ``Concurrent learning adaptive control of linear systems with exponentially convergent bounds,'' \emph{Int. J. Adapt. Control Signal Process.}, vol.~27, no.~4, pp. 280--301, 2013.

\bibitem{SCC.Bell.Harris.ea2019}
Z.~I. Bell, C.~Harris, R.~Sun, and W.~E. Dixon, ``Structure and velocity estimation of a moving object via synthetic persistence by a network of stationary cameras,'' in \emph{Proc. IEEE Conf. Decis. Control}, IEEE.\hskip 1em plus 0.5em minus 0.4em\relax IEEE, 2019, pp. 1601--1606.

\bibitem{SCC.Bouguet.ea2001}
J.-Y. Bouguet, ``Pyramidal implementation of the affine lucas kanade feature tracker description of the algorithm,'' \emph{Intel corporation}, vol.~5, no. 1-10, p.~4, 2001.

\bibitem{SCC.Lucas.Kanade1981}
B.~Lucas and T.~Kanade, ``An iterative image registration technique with an application to stereo vision,'' in \emph{Proc. Int. Joint Conf. Artif. Intell.}, 1981, pp. 674--679.

\bibitem{SCC.Parikh.Kamalapurkar.ea2015}
\BIBentryALTinterwordspacing
A.~Parikh, R.~Kamalapurkar, H.-Y. Chen, and W.~E. Dixon, ``Homography based visual servo control with scene reconstruction,'' in \emph{Proc. IEEE Conf. Decis. Control}, Osaka, Japan, Dec. 2015, pp. 6972--6977.  \url{https://ieeexplore.ieee.org/document/7403318}
\BIBentrySTDinterwordspacing

\bibitem{SCC.Bell.Parikh.ea2016}
Z.~Bell, A.~Parikh, J.~Nezvadovitz, and W.~E. Dixon, ``Adaptive control of a surface marine craft with parameter identification using integral concurrent learning,'' in \emph{Proc. IEEE Conf. Decis. Control}, 2016.

\bibitem{SCC.Khalil2002}
H.~K. Khalil, \emph{Nonlinear systems}, 3rd~ed.\hskip 1em plus 0.5em minus 0.4em\relax Upper Saddle River, NJ: Prentice Hall, 2002.

\end{thebibliography}

\end{document}